\newtheorem{thm}{Theorem}[section]
\newtheorem{lemma}[thm]{Lemma}
\newtheorem{theorem}[thm]{Theorem}
  \date{}
\title{Further Explanations on \\``SAT Requires Exhaustive Search''}
\author{Qingxiu Dong$^{1}$,\quad Guangyan Zhou$^{2}$,\quad Ke Xu$^{3}$$^{\dag}$\thanks{\dag Corresponding author}\\
\footnotesize $^{1}$School of Computer Science, Peking University \\
\footnotesize $^2$Department of Mathematics and Statistics, Beijing Technology and Business University\\
\footnotesize $^{3}$State Key Lab of Software Development Environment, Beihang University\\
\footnotesize kexu@buaa.edu.cn
}
\begin{document}
\maketitle

\begin{abstract}
Recently, ~\citet{sat} introduced a constructive approach for exploring computational hardness, proving that SAT requires exhaustive search.    
In light of certain misinterpretations concerning the contributions and proofs in that paper, we focus on providing detailed explanations in this work. 
We begin by delineating the core innovation of the constructive approach, shedding light on the pivotal concept of algorithm designability. 
We address the overlooked white-box diagonalization method and highlight the concept of an almost independent solution space.
% First, we highlight the primary innovation behind the constructive approach, including the foundational concept of algorithm designability, the overlooked white-box diagonalization method, as well as the idea of almost independent solution space.
In response to specific misunderstandings, such as the concerns surrounding the assumptions of Lemma 3.1, we offer comprehensive clarifications aimed at improving the comprehension of the proof.
We are grateful for the feedback received on our prior paper and hope this work can foster a more well-informed discussion.

\end{abstract}

\section{Background}
Currently, proving complexity lower bounds (e.g., a super-linear lower bound) for NP-complete problems has been challenging, leading to the failure of attempts to solve the P vs NP problem by demonstrating lower bounds for such problems~\citep{cook2000complexity}. The difficulty arises because the reduction-defined NP-complete problems, for example, the 3-SAT problem, have various effective strategies available for avoiding exhaustive search. %reduction-defined problems

Recently, ~\citet{sat} introduced a constructive approach for exploring computational hardness. This work demonstrates the fundamental difference between syntax and semantics by constructing extremely hard examples of constraint satisfaction problem (CSP) with large domains and SAT with long clauses and therefore proves that SAT requires exhaustive search.
Simultaneously, \citet{llm4science} conducted a pilot study on the P versus NP problem to explore the potential of large language models (LLMs) for scientific research.
In this paper, 
% Socratic reasoning, a general framework is proposed to encourage LLMs to recursively discover, solve, and integrate problems while facilitating self-evaluation and refinement.
% With this framework, 
LLM successfully produces a proof schema and engages in rigorous reasoning throughout 97 dialogue turns, in alignment with~\cite{sat}, shedding light on LLM for Science.

We welcome and value the attention and scrutiny from researchers across various fields on these two pieces of work. However, we have noticed some misunderstandings regarding certain arguments and proofs presented in our work. For instance, misunderstanding of the difference between the long clause SAT and the 3-SAT problems~\cite{evaluating}, the concerns regarding the assumption behind Lemma 3.1, etc.

In this paper, to address and resolve potential misunderstandings, as well as to foster a more informed discussion on the problem of P versus NP, we provide detailed explanations to tackle prevalent concerns that have been raised.
In particular, we highlight the primary innovation and conceptual cornerstones behind the proof, including the concept of algorithm designability, the white-box diagonalization methods, and the almost independent solution space.  
Furthermore, we clarify the key differences between SAT problems with extended clauses and those with 3-SAT clauses, and we address the issues that have been brought up regarding Lemma 3.1 and Lemma 3.2.
We hope this work can eliminate the remaining confusion arising from our earlier works and draw focus to the fundamental problems within the area of computational complexity.

% \section{Primary Innovation}
% \section{Conceptual cornerstone}
\section{Heart of Proof}
To better explain why \cite{sat} approaches the P vs NP problem in a constructive approach with Model RB, this section outlines the conceptual innovation and the main intuitions behind the proof.
First, we introduce the fundamental yet previously neglected concept -- algorithm designability.
Then, we distinguish the white-box diagonalization method and black-box diagonalization method for complexity lower bounds, laying the foundation for understanding the proof.
Last but not least, we highlight the idea of ``almost independent solution space'', which is central to understanding the computational hardness addressed by Model RB.
 
% We address the overlooked concept of algorithm designability, and the idea of ``almost independent solution space'', which is central to understanding the computational hardness addressed by Model RB.
% These concepts, together with the reconsideration of diagonalization methods for complexity lower bounds, lay the foundation for understanding the proof's novel strategy.

\subsection{New Concept: Algorithm Designability}
In theoretical computer science, previous endeavors traditionally focused on two core aspects: (1) computability, which queries the existence of an algorithm for a given problem; (2) computational complexity, which seeks to determine the feasibility of constructing an efficient algorithm. 

The proof of~\cite{sat} first focuses on a novel concept, designability. Designability involves a fundamental question: For a given problem, is it possible to conceive an algorithm? 
As mentioned in the second paragraph of the introduction of~\cite{sat}'s paper, the designability problem is similar to that proposed by David Hilbert in the early period of the 20th century if it is always possible to construct a finite formal system that can replace an area of mathematics containing infinitely many true mathematical statements.
When confronted with combinatorial challenges, exhaustive algorithms are inherently dictated by the definition of the problem itself, thereby negating the notion of `designed' algorithms. 
Consequently, the concept of algorithm designability is essentially replacing exhaustive algorithms with non-exhaustive ones.
With the concept of algorithm designability, we must prove the algorithm designability after proving the algorithm existence, rather than working on designing an efficient algorithm immediately. 

In short, algorithm designability is a more elemental and fundamental problem. However, it has been overlooked for decades, and the proof of ~\cite{sat}'s paper also explores this problem.

\subsection{Diagonalization Methods: White-Box or Black-Box?}
Currently, no partial result of complexity lower bounds (e.g. a super-linear lower bound) has been proved for any NP-Complete problem. This is very abnormal and signifies the possibility of a wrong path from the very beginning.

Both G\"{o}del and Turing have used the diagonalization method to prove impossibility results to answer Hilbert's three questions, which can be seen as the starting point of computer science. However, there is a key (but easily overlooked) difference between their proofs. Specifically, G\"{o}del's proof aims at a specific area of mathematics, where the (seemingly uneasy) constructive approach is indispensable. In contrast, Turing's proof lies in the abstract level of mathematics, with no need to construct a concrete program for reasoning. For convenience, these two proof strategies can be referred to as white-box and black-box diagonalization methods, respectively. Proving algorithmic impossibility results for a given problem is very similar to proving logical impossibility results for a specific area because they both involve reasoning about concrete examples. Thus, G\"{o}del's white-box diagonalization method is more suitable for proving lower bounds, which also naturally avoids various barriers (e.g. the relativization barrier and the natural proof barrier) originated and evolved from the (seemingly easy) choice of the black-box one in the beginning of complexity theory.

\subsection{Intuition behind the Proof: Almost Independent Solution Space}
%We aim to emphasize 
Another core insight behind the proof is the almost independent solution space. 
This intuition is crucial for understanding the construction of extremely hard problems based on the initial Model RB~\citep{xu2000exact}.

As explored in Section 3.1 of~\cite{llm4science}, when examining the P versus NP problem in computational theory, it is essential to offer intuitive reasons for why certain problems are computationally difficult.
The almost independence within the solution space offers a clear explanation for the computational hardness. 
It implies that there is no available strategy for effectively reducing the search space, which means we cannot skip any part of it in our search for a solution.
At the same time, the solution space's high degree of independence is a distinctive feature of Model RB, as highlighted in \cite{llm4science} and originally proposed by \citet{sat}. That’s the core intuition behind the proof.

% \section{Main Explanations}
\section{Explanation for Major Concerns}
In this section, we address the primary concerns associated with the study of \citet{sat} and \citet{llm4science}, providing detailed clarifications for each point raised.
We appreciate the interest and examination of our work by researchers from various disciplines. To clarify any confusion surrounding our arguments and proofs, we hope our explanations can facilitate better-informed discussion.

\subsection{Fundamental Differences between Long Clause SAT and 3-SAT}
Regarding known solutions of 3-SAT problems~\citep{evaluating}, we would like to clarify that our study focuses on the long clause SAT problem, which fundamentally differs from the 3-SAT problem.

\citet{evaluating}'s misunderstanding that ``their claim contradicts known results that it is possible to solve 3-SAT problems without exhaustive search'' seems to arise from a misinterpretation of our work. 
Indeed, as we specifically emphasized in the penultimate paragraph of~\citet{sat}, our findings do not apply to k-SAT problems with clauses of constant length, such as 3-SAT. 
Additionally, it has been mentioned in the abstract of~\cite{sat} that ``Specifically, proving lower bounds for many problems, such as 3-SAT, can be challenging because these problems have various effective strategies available for avoiding exhaustive search.'' 
Our results are not contradictory to the established understanding that 3-SAT problems can be solved without exhaustive search, albeit still requiring exponential time.

To avoid any potential misunderstandings, we wish to reiterate that the paper investigates instances of SAT with long clauses, and when we refer to the necessity of exhaustive search, we are speaking exclusively about solving these particular instances of SAT. Hence, the long clause SAT discussed in our paper is an entirely different scenario from that of 3-SAT.

\subsection{Independence of Lemma 3.1 from Model RB's Unique Properties}
Regarding the critique on subproblem properties~\citep{evaluating}, the misinterpretation mainly stems from a misinterpretation of the scope and application of Lemma 3.1 and Theorem 3.2 within~\cite{sat}. 

\citet{evaluating}'s main claim is that we assume, without proof, that each subproblem generated is also an instance of Model RB, which subsequently produces $d$ subproblems, thus necessitating an exhaustive search of $O(d^n)$. Furthermore, ~\citet{evaluating} argue that our subproblems lack the essential characteristic of Model RB.

\begin{lemma}\label{lm:subproblem}
If a CSP problem with $n$ variables and domain size $d$ can be solved in $T(n)=O(d^{cn})$ time ($0<c<1$ is a constant), then at most $O(d^c)$ subproblems with $n-1$ variables are needed to solve the original problem.
\end{lemma}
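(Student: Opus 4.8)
The plan is to read the runtime hypothesis as a constraint on the branching factor of a self-reducible recursion. First I would make precise what a \emph{subproblem with $n-1$ variables} is: fixing the chosen variable to one of the $d$ values of the domain turns the instance into a CSP on the remaining $n-1$ variables, so there are $d$ candidate subproblems in principle. Let $N$ denote how many of these the algorithm must actually solve in order to decide the original instance; the target inequality is $N=O(d^c)$. The structural point that makes the argument go through is self-similarity: each retained subproblem is again a CSP of the same type, handled by the same procedure, so the cost of solving it is $T(n-1)$.

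Granting this, I would write the recurrence
\begin{equation}
T(n) = N\,T(n-1) + g(n),
\end{equation}
where $g(n)$ is the polynomial bookkeeping needed to pick the branching variable, spawn the subproblems, and combine their answers. Unrolling from $T(0)=O(1)$ yields $T(n)=\Theta(N^n)$ up to polynomial factors, because for $N\ge 2$ the geometric term $N^n$ dominates the accumulated overhead. Rewriting the hypothesis as $T(n)=O((d^c)^n)$ and matching growth rates forces $N^n \le K(d^c)^n$ for some constant $K$, hence $N \le d^c K^{1/n} = d^c(1+o(1))$, i.e. $N=O(d^c)$. The same conclusion reads cleanly as a contrapositive: were $N=\omega(d^c)$ branches genuinely required, the running time would be $\Theta(N^n)=\omega(d^{cn})$, contradicting the assumed bound.

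I expect the main obstacle to be justifying the self-similarity that underwrites the recurrence --- that the effort spent on each retained $(n-1)$-variable branch is truly of order $T(n-1)$, and that distinct branches contribute additively rather than sharing or amortizing computation. Any sub-additivity (e.g.\ memoization across branches, or a branch that can be skipped after inspecting another) would let $N$ fall below the geometric rate, so the quantity the lemma controls is precisely the number of \emph{independently solved} subproblems. I would keep the statement at this general level, phrased solely in terms of $T(n)$, so that the delicate question of whether such sub-additivity is possible is quarantined from the lemma itself and answered separately: it is the almost independent solution space of Section 2.3 that, in the case of Model RB, rules out skipping or amortizing branches and thereby forces the large branching factor via the contrapositive of Lemma \ref{lm:subproblem}.
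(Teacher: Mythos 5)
Your proposal is correct in the same conditional sense as the paper's own argument, and it rests on exactly the assumptions the paper itself defends: that an exact solver proceeds by combinatorial search, splitting on a variable into independently solved subproblems (Sections 3.3 and 3.4), so that the runtime hypothesis translates into a bound on the branching factor. The mechanics differ in a way worth noting, however. The paper's argument is a single-level ratio: with $N$ branches at the top level and each $(n-1)$-variable branch costing $T(n-1)$, one gets $N \le T(n)/T(n-1) = O(d^{cn})/O(d^{c(n-1)}) = O(d^{c})$, and --- as Section 3.2 stresses in rebutting \citet{evaluating} --- nothing is assumed about how the subproblems are themselves subsequently solved; only the initial level of decomposition enters. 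Your version instead unrolls $T(n) = N\,T(n-1) + g(n)$ all the way down to $T(0)$ with the \emph{same} branching factor $N$ at every level, and reads off $N \le d^{c}(1+o(1))$ from $N^{n} = O(d^{cn})$. What your route buys: it sidesteps the implicit smoothness requirement of the ratio argument (the bound $T(n) = O(d^{cn})$ alone does not force $T(n)/T(n-1) = O(d^{c})$ for an arbitrary function $T$), because the boundary condition $T(0) = O(1)$ does the work. What it costs: a uniform branching factor across all $n$ levels of a continual recursive decomposition is a strictly stronger structural hypothesis than the lemma needs, and it is precisely the multi-level reading that Section 3.2 disclaims (``does not require, nor does it imply, a continual recursive decomposition''); moreover, if the branching factor were allowed to vary by level, your unrolling would only control the geometric mean of the factors, not the top-level count that the proof of Theorem \ref{th:main} actually uses. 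Your closing move --- quarantining additivity and the impossibility of skipping or amortizing branches as the real content of the assumption, to be discharged for Model RB by the almost independent solution space --- coincides with the paper's own position.
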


To address these concerns, we clarify that Lemma~\ref{lm:subproblem}, as shown above, is a discussion of the general approach to solving CSPs and does not rely on the unique properties of Model RB. ~\citet{evaluating}'s assertion that we assume each subproblem is an instance of Model RB is incorrect. Our analysis is far broader and encompasses all possible CSP instances with $n$ variables and domain size of $d$, not just those that adhere to Model RB.

In addition, the proof of Theorem 3.2 leverages only the initial level of decomposition, where the original problem is broken down into subproblems:

\begin{theorem}\label{th:main}
Model RB  cannot be solved in  $O(d^{c n})$ time for any constant $0<c<1$.
\end{theorem}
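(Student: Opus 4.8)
The plan is to argue by contradiction, extracting from Lemma~\ref{lm:subproblem} an upper bound on the branching factor of any fast algorithm and then showing this is incompatible with the almost independent solution space of Model RB --- and, as the excerpt indicates, only the first level of decomposition is needed. Suppose toward a contradiction that Model RB is solvable in $T(n) = O(d^{cn})$ time for some fixed constant $0 < c < 1$.

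First I would apply Lemma~\ref{lm:subproblem} to the initial decomposition: under the time assumption, the $n$-variable instance is solved by reducing to at most $O(d^c)$ subproblems on $n-1$ variables. Reading each subproblem as a choice of value for a single distinguished variable $x_1$ (whose domain has $d$ elements), this says the algorithm's effective branching factor at the first level is $O(d^c)$. The arithmetic is decisive: since $c < 1$, $d^c / d = d^{c-1} \to 0$, so the algorithm can afford to descend into only a vanishing fraction of the $d$ possible values of $x_1$, leaving $\Omega(d)$ values unexplored. Note that this step uses nothing about the subproblems themselves --- in particular it does not require them to be Model RB instances --- which is precisely why Lemma~\ref{lm:subproblem} is stated for arbitrary CSPs.

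The crux of the argument, and the step I expect to be hardest, is to show that for Model RB no algorithm can get away with examining only $o(d)$ of these branches, i.e.\ that the intrinsic branching factor is $\Omega(d)$. This is where almost independence is essential. I would formalize ``almost independence'' as the statement that, with high probability over the random choice of a Model RB instance, the information obtained from any $o(d)$ of the value-$x_1$ subproblems does not determine the satisfiability contributed by the remaining branches: the constraints do not propagate, so no branch can be pruned on the basis of the others. Concretely, I would exhibit, for any adaptively chosen set of $O(d^c)$ explored branches, two Model RB instances that agree on all explored branches yet differ in their answer because a decisive solution (or its absence) is placed in an unexplored branch --- a white-box diagonalization against the fixed algorithm. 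Almost independence is what guarantees such a pair exists with high probability, and carrying this out against an adaptive algorithm, while controlling the probabilistic dependence so that ``almost'' independent becomes a genuine bound, is the technical heart of the proof.

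Granting this, the contradiction follows at once: the algorithm returns the same output on the two instances (it inspects only the explored branches, on which they coincide), yet their correct answers differ, so the algorithm is not correct on all Model RB instances. Hence the branching factor must be $\Omega(d)$, contradicting the bound $O(d^c)$ with $c < 1$ obtained from Lemma~\ref{lm:subproblem}. Since $c$ was an arbitrary constant in $(0,1)$, Model RB cannot be solved in $O(d^{cn})$ time. The delicate points are the high-probability quantifiers --- the conclusion is about Model RB as a random ensemble rather than a single instance --- and the adaptivity of branch selection, which forces the diagonalization to work against every possible choice the algorithm might make.
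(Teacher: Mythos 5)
Your proposal follows essentially the same route as the paper: a proof by contradiction that applies Lemma~\ref{lm:subproblem} only to the first level of decomposition, obtaining the $O(d^c)$ bound on the number of values assigned to a single variable, and then invokes the almost independence of Model RB's solution space (the white-box diagonalization the paper emphasizes) to conclude that leaving $\Omega(d)$ values unexplored must yield a wrong answer on some instance. Your identification of the adaptive, high-probability almost-independence step as the technical heart is exactly where the paper's own argument (elided in the excerpt) does its real work, so the structures coincide.
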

\begin{proof}
... Let $I$ be a RB instance with $n$ variables and $rn\ln d$ constraints. Suppose there exists some constant $0<c<1$ such that  $I$ can be solved in $O(d^{c n})$ time, then Lemma \ref{lm:subproblem} implies that $I$ can also be solved by assigning at most $O(d^c)$ values to an arbitrary variable ... This completes the proof of Theorem \ref{th:main}.
\end{proof}

It does not require, nor does it imply, a continual recursive decomposition into the subproblems of Model RB instances. \citet{evaluating} appears to have conflated the general methodology for CSP decomposition with the specific properties of Model RB.

The algorithm discussed in our paper is exact and deterministic. That means, for any instance with $n$ variables and a domain size of $d$, the algorithm deterministically decomposes the original problem into a certain number of subproblems, including but not limited to instances of Model RB. 
Lemma 3.1 aims to establish an upper bound on the number of subproblems for solving any CSP instance with these parameters.
Theorem 3.2 employs a proof by contradiction to demonstrate that the upper bound established is not consistent with the resolution of Model RB instances. It does not assert that each subproblem must exhibit unique properties of Model RB. Therefore, ~\citet{evaluating}'s concern regarding `proof on each of the subproblems is also an instance of Model RB' is unwarranted.

In a nutshell, our approach is designed to address the general case of CSPs, and the properties of Model RB are only utilized when illustrating the contradiction in Theorem 3.2. \citet{evaluating}'s critique is based on a misunderstanding of our method's scope and its application to Model RB.

\subsection{Dividing-and-conquering Algorithms for CSPs}
Regarding the main concerns on the assumption behind Lemma 3.1, we want to clarify that the only approach for exact solutions of CSPs and combinatorial problems is combinatorial search through the solution space.

CSPs are analogous to combinatorial problems, for which the current solution methods involve a combinatorial search through the solution space. This is an established and inescapable reality of solving such problems\footnote{\url{https://en.wikipedia.org/wiki/Combinatorial_search}}\footnote{\url{https://en.wikipedia.org/wiki/Constraint_satisfaction}}. Combinatorial searches represent the state-of-the-art in precisely solving general CSPs and combinatorial problems, and the current literature does not offer any alternative that escapes the necessity of exploring the search space to find exact solutions.

There are infinitely many algorithms that may emerge in the future, and it is impossible to exhaustively enumerate all of them. Regarding these potential future algorithms, it is inevitable to make assumptions that are consistent with the realities.
For example, without the Church-Turing thesis, the halting problem cannot be proved to be uncomputable in any machines. 

Therefore, we wish to clarify that the divide-and-conquer strategy in Lemma ~\ref{lm:subproblem} does not pertain to specific algorithms; rather, it denotes a combinatorial search through the solution space, which is the sole method for deriving exact solutions for CSPs.

\subsection{Assumptions are Necessary for Proving Lower Bounds}
In the discussions on the assumption presented in Section 3 ``Main Results'' of ~\cite{sat}'s paper, which involves dividing the original problem into subproblems, we aim to clarify that assumptions are necessary for reasoning and making conclusions about an infinite set of algorithms.

The form of algorithms is defined by Turing machines, while the essence of algorithms is how to use all possible mathematics to solve problems efficiently. Essentially, proving unconditional lower bounds for a given problem (including proving P $\neq$ NP) is using mathematics to question mathematics itself. This is a self-referential question independent of the axioms of mathematics. The fundamental reason is that self-referential statements about mathematical impossibility cannot be proved unconditionally, and even has no partial result. For example, consistency (i.e., impossibility of contradiction) is the most basic requirement for mathematics itself. However, as implied by G\"{o}del's second incompleteness theorem, mathematics, no matter how powerful it is, cannot prove its own consistency. This remarkable theorem, without any partial result, is established on underlying assumptions about the essence of formal proofs.

\section{Conclusion}
In this work, we aim to highlight the main contributions of our work and provide additional context to address concerns discussed in our prior research.
Therefore, we revisit the main concerns regarding \cite{sat} and \cite{llm4science}, providing further explanations of our intuitions and proof.

To better explain the theoretical motivation behind the proof in our papers, we re-interpreted the main innovation behind the proof from the perspectives of algorithm designability and the almost independent solution space of the Model RB, as well as G\"{o}del's white-box diagonalization method.
These ideas are important for comprehensively understanding why the constructive approach was chosen and how it works.
Regarding possible misunderstandings regarding that paper, we clarify the main arguments and proofs point by point.
Discussions on both of the papers are valued and appreciated, and we hope the explanation can foster a more informed discussion regarding the fundamental problem of P vs NP.

\bibliography{ref}
\bibliographystyle{plainnat}

\end{document}